\def\F{\mathbb{F}}
\def\O{\CMcal{O}}
\theoremstyle{plain}
\newtheorem{theorem}{Theorem}[section]
\newtheorem{lemma}[theorem]{Lemma}
\theoremstyle{definition}
\newtheorem{definition}[theorem]{Definition}
\newtheorem{fact}[theorem]{Fact}
\newtheorem{conjecture}[theorem]{Conjecture}
\newcommand{\reals}{{\mathbb R}}
\newcommand {\set} [1] {\ensuremath{ \left\lbrace #1 \right\rbrace }}
\newcommand{\codim}[1]{\mathrm{codim}\br{#1}}
\newcommand {\br} [1] {\ensuremath{ \left( #1 \right) }}
\newcommand {\norm} [1] {\ensuremath{ \left\| #1 \right\| }}
\newcommand {\abs} [1] {\ensuremath{ \left| #1 \right| }}
\newcommand {\defeq} {\ensuremath{ \stackrel{\mathrm{def}}{=} }}
\newcommand{\rk}[1]{\mathrm{rk}\br{#1}}
\newcommand{\pdt}[1]{\mathrm{D}_{\oplus}\br{#1}}
\newcommand{\cc}[2]{\mathrm{CC}^{\br{#1}}\br{#2}}
\newcommand{\pcert}[1]{\mathrm{C}_{\oplus}\br{#1}}
\newcommand{\minpcert}[2]{\mathrm{C}_{\oplus,\mathrm{min}}^{#1}\br{#2}}
\newcommand {\email} [1] {\href{mailto:#1}{\texttt{#1}}}
\newcommand {\mytitle} {Parity Decision Tree Complexity and $4$-Party Communication Complexity of XOR-functions Are Polynomially Equivalent}
\newcommand{\Penghui}{Penghui Yao}
\newcommand{\CWI}{CWI, Amsterdam}
\newcommand {\authorblock} [3] {
	\begin{minipage}[t]{0.3\linewidth}
		\centering
		{#1}\\[0.8ex]
		 {#2}\\[-0.7ex]
		\email{#3}
	\end{minipage}\vspace{1ex}
}
\begin{document}

\begin{titlepage}
\title{\textbf{\mytitle}\\[2ex]}

\author{
	\authorblock{\Penghui\thanks{Supported by the European Commission FET-Proactive project Quantum Algorithms (QALGO) 600700.}}{\CWI}{phyao1985@gmail.com} 
	}

\clearpage\maketitle
\thispagestyle{empty}

\begin{abstract}
In this note, we study the relation between the {\em parity decision tree complexity} of a boolean function $f$, denoted by $\pdt{f}$, and the {\em $k$-party number-in-hand multiparty communication complexity} of the XOR functions $F(x_1,\ldots, x_k)\defeq f(x_1\oplus\cdots\oplus x_k)$, denoted by $\cc{k}{F}$. It is known that $\cc{k}{F}\leq k\cdot\pdt{f}$ because  the players can simulate the parity decision tree that computes $f$.  In this note, we show that  \[\pdt{f}\leq\O\br{\cc{4}{F}^5}.\] Our main tool is a recent result from additive combinatorics due to Sanders~\cite{Sanders:2012}. As $\cc{k}{F}$ is non-decreasing as $k$ grows, the parity decision tree complexity of $f$ and the communication complexity of the corresponding $k$-argument XOR functions are polynomially equivalent whenever $k\geq 4$.

Remark: After the first version of this paper was finished, we discovered that Hatami and Lovett had already discovered the same result a few years ago, without writing it up.
\end{abstract}
\end{titlepage}

\section{Introduction}

\textbf{Communication complexity and the Log-Rank conjecture for XOR functions} Communication complexity quantifies the minimum amount of communication needed for computation when inputs are distributed among different parties~\cite{Yao:1979:CQR:800135.804414, Kushilevitz96}.  In the model of two-party communication, Alice and Bob hold inputs $x$ and $y$, respectively, and they are supposed to compute the value of a function $F(x,y)$ using as little communication as possible. One of the central problems in communication complexity is the {\em Log-Rank conjecture}.  The conjecture proposed by Lov\'asz and Saks in~\cite{Lovasz:1988} asserts that the communication complexity of $F$ and $\log\mathrm{rank}\br{M_F}$ are polynomially equivalent for any $2$-argument total boolean function $F$, where $M_F=[F(x,y)]_{x,y}$ is the communication matrix of $F$. Readers may refer to~\cite{TsangWXZ:2013} for more discussion on the conjecture. The conjecture is notoriously hard to attack.  It was shown over 30 years~\cite{Mehlhorn:1982} that $\log\mathrm{rank}\br{M_F}$ is a lower bound on the deterministic communication complexity of $F$. The state of the art is $$\cc{2}{F}\leq\O\br{\sqrt{\mathrm{rank}\br{M_F}}\log\mathrm{rank}\br{M_F}},$$ 
where $\cc{2}{F}$ stands for the two-party deterministic communication complexity of $F$. It is from a recent breakthrough due to Lovett~\cite{Lovett:2014a}. The largest gap between $\cc{2}{F}$ and $\log\mathrm{rank}\br{M_F}$ is $\cc{2}{F}\geq\Omega\br{\log\mathrm{rank}\br{M_F}^{\log_36}}$ due to Kushilevitz in~\cite{Nisan:1995} .

In~\cite{Zhang:2010}, Zhang and Shi initiated the study the Log-Rank conjecture for a special class of functions called {\em XOR functions}.

\begin{definition}\label{def:xorfunction}
We say a $k$-argument function $F:\br{\set{0,1}^n}^k\rightarrow\set{0,1}$ is an XOR-function if there exists a function $f:\set{0,1}^n\rightarrow\set{0,1}$ such that $F(x_1,\ldots, x_k)=f(x_1\oplus \ldots\oplus  x_k)$ for any $x_1,\ldots, x_k\in\set{0,1}^n$, where $\oplus$  is bitwise xor.
\end{definition}

XOR functions include many important examples, such as Equality and Hamming distance. The communication complexity of $XOR$ functions has been studied extensively in the last decade ~\cite{Zhang:2009,Lee:2010,Montanaro:2010,Leung:2011,TsangWXZ:2013,Zhang:2014}. 
A nice feature of XOR functions is that the rank of the communication matrix $M_F$ is exactly the {\em Fourier sparsity} of $f$. 
\begin{fact}\cite{BernasconiC:1999}\label{fact:rankisspasity}
For XOR function $F(x,y)\defeq f(x\oplus y)$, it holds that $\mathrm{rank}\br{M_F}=\norm{\hat{f}}_0$, where $\norm{\hat{f}}_0$ is the Fourier sparsity of $f$ (see Section~\ref{Sec:preliminaries} for the definition) and $M_F$ is the communication matrix of $F$. 
\end{fact}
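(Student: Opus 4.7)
The plan is to identify $M_F$ with an operator that is diagonal in the Fourier (character) basis, so that its rank is literally the number of nonzero Fourier coefficients of $f$. The starting point is the Fourier expansion over $\reals$,
\[
f(z) \;=\; \sum_{s \in \set{0,1}^n} \hat{f}(s)\,\chi_s(z), \qquad \chi_s(z) \defeq (-1)^{s \cdot z},
\]
together with the key multiplicative identity $\chi_s(x \oplus y) = \chi_s(x)\,\chi_s(y)$. Substituting into $F(x,y) = f(x \oplus y)$ yields
\[
M_F[x,y] \;=\; \sum_{s \in \set{0,1}^n} \hat{f}(s)\,\chi_s(x)\,\chi_s(y),
\]
which I read as the matrix identity $M_F = \sum_{s} \hat{f}(s)\, v_s v_s^{T}$, where $v_s \in \reals^{2^n}$ is the character vector $v_s[x] = \chi_s(x)$.

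The second step is the orthogonality of characters: $\langle v_s, v_t \rangle = 2^n \delta_{s,t}$, so the rescaled vectors $u_s \defeq 2^{-n/2} v_s$ form an orthonormal basis of $\reals^{2^n}$. Rewriting the decomposition as $M_F = \sum_s (2^n \hat{f}(s))\, u_s u_s^{T}$ exhibits a genuine spectral decomposition of $M_F$: the eigenvalues are precisely $\set{2^n \hat{f}(s) : s \in \set{0,1}^n}$. Since the rank of a diagonalisable matrix equals the number of nonzero eigenvalues, we obtain $\mathrm{rank}(M_F) = \abs{\set{s : \hat{f}(s) \neq 0}} = \norm{\hat{f}}_0$, which is the claimed equality.

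There is not really a hard step here; the only things to be careful about are the Fourier normalisation (so that the statement comes out as exact equality rather than up to a constant) and the fact that one needs both directions of the rank equality. The upper bound $\mathrm{rank}(M_F) \leq \norm{\hat{f}}_0$ is immediate from the sum-of-rank-one decomposition, while the matching lower bound crucially uses the orthogonality of the characters, which turns the decomposition into an honest eigendecomposition rather than an arbitrary low-rank factorisation. If one prefers to avoid eigenvalues, the same conclusion follows by arguing that the $v_s$ with $\hat{f}(s) \neq 0$ are linearly independent columns in the span of $M_F$'s row space, giving the lower bound directly.
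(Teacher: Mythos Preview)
Your argument is correct and is the standard proof: diagonalising $M_F$ in the character basis shows its eigenvalues are exactly $2^n\hat{f}(s)$, so the rank equals $\norm{\hat{f}}_0$. The paper does not actually prove this statement --- it is quoted as a known fact from \cite{BernasconiC:1999} --- but your write-up is precisely the argument one would give.
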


Therefore, the Log-Rank conjecture for XOR functions is equivalent to the question  whether there exists a protocol computing $F$ with communication $\log^{\O(1)}\norm{\hat{f}}_0$. However, the Log-Rank conjecture is still difficult for this special class of functions.  One nice approach  proposed in~\cite{Zhang:2009} is to design a parity decision tree (PDT) to compute $f$. PDTs allow query the parity of any subset of input variables. For any $k$-argument XOR function $F$ given in Definition~\ref{def:xorfunction}, we can construct a communication protocol by simulating the PDT for $f$, with communication $k$ times the PDT complexity of $f$. It is therefore sufficient to show that $\pdt{f}\leq\log^{\O(1)}\norm{\hat{f}}_0$.  Using such an approach, the Log-Rank Conjecture has been established for several subclasses of XOR functions~\cite{Zhang:2009,Montanaro:2010,TsangWXZ:2013}. 

One question regarding this approach is whether $\pdt{f}$ and $\cc{2}{F}$ are polynomially equivalent.  Is it possible to design a protocol for $F$ much more efficient than simulating the parity decision tree of $f$? 

\begin{conjecture}\label{conj:pdtvscc2}
There is a constant $c$ such that $\cc{2}{F}=\O\br{\pdt{F}^c}$ for any boolean function $f:\set{0,1}^n\rightarrow\set{0,1}$ and $F(x,y)\defeq f(x\oplus y)$.
\end{conjecture}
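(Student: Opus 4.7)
The conjecture as worded asks for a polynomial upper bound on $\cc{2}{F}$ in terms of $\pdt{F}$, which is precisely the ``routine'' direction already flagged in the introductory paragraphs (the inequality $\cc{k}{F} \leq k \cdot \pdt{f}$). My plan is therefore to establish the bound with the explicit constant $c = 1$ by a direct protocol that simulates a parity decision tree; no additional machinery is required.

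First I would fix a minimum-depth parity decision tree $T$ computing $F$, of depth $d = \pdt{F}$. Any parity query at an internal node of $T$ is determined by a subset $S \subseteq [2n]$ of the input coordinates. Decomposing $S = S_x \sqcup S_y$ according to whether each coordinate belongs to Alice's register $x \in \set{0,1}^n$ or Bob's register $y \in \set{0,1}^n$, one has $\chi_S(x,y) = \chi_{S_x}(x) \oplus \chi_{S_y}(y)$. In the protocol, Alice transmits the single bit $\chi_{S_x}(x)$ and Bob replies with the single bit $\chi_{S_y}(y)$; both parties then XOR the two bits, use the result as the answer to the parity query, and descend into the corresponding child of $T$. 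After at most $d$ such rounds both players reach a common leaf, whose label is $F(x,y)$, and they output it. The total communication is $2\pdt{F}$, establishing $\cc{2}{F} \leq 2\pdt{F}$, i.e.\ the conjecture with $c=1$.

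There is no real obstacle in this direction. Note, as a sanity check, that $\pdt{F} = \pdt{f}$: any parity query $\chi_T(z)$ on the $n$-bit input $z$ to $f$ coincides with the query $\chi_{T \sqcup T}(x,y) = \chi_T(x) \oplus \chi_T(y) = \chi_T(x \oplus y)$ on the $2n$-bit input to $F$, so PDTs for $f$ and for $F$ have identical depth. Consequently the bound above is equivalent to the $\cc{2}{F} \leq 2\pdt{f}$ inequality already mentioned. The genuinely difficult direction, and the one the paper's main theorem addresses for four or more parties, is the converse inequality $\pdt{f} \leq \cc{2}{F}^{O(1)}$; that converse remains open in the two-party setting and is the real content one expects a proof of Conjecture~\ref{conj:pdtvscc2} to supply, but as literally stated the conjecture is a one-line exercise.
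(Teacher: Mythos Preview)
The paper does not prove this statement: it is stated as a \emph{conjecture} and explicitly listed as the first open problem at the end of the paper. There is therefore no ``paper's own proof'' to compare against.

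Your analysis of the situation is correct on both counts. First, the inequality as literally written, $\cc{2}{F}=\O\br{\pdt{F}^c}$, is indeed the easy direction: your simulation argument gives $\cc{2}{F}\le 2\,\pdt{f}$, and your observation that $\pdt{F}=\pdt{f}$ (via $f(z)=F(z,0)$ for one direction and the diagonal embedding for the other) is valid, so the literal statement holds with $c=1$. The paper itself records this fact in the introduction as $\cc{k}{F}\le k\cdot\pdt{f}$.

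Second, you are right that the intended content of the conjecture is almost certainly the converse bound $\pdt{f}\le\cc{2}{F}^{\O(1)}$; the surrounding discussion (``Is it possible to design a protocol for $F$ much more efficient than simulating the parity decision tree of $f$?'') and the paper's main theorem, which establishes exactly this converse for $k\ge 4$, make that clear. That converse for $k=2$ is what the paper leaves open, so the conjecture as written appears to contain a typo in the direction of the inequality (or in which quantity is bounded by which).
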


If this holds, then the Log-Rank conjecture for XOR-functions is equivalent to a question in parity decision tree. Namely, $\pdt{f}\leq\mathrm{poly}\log\br{\norm{\hat{f}}_0}$. In this note, we prove a weaker variant of the above conjecture. Given a total boolean function $f$, we may also consider the communication complexity of the $k$-argument XOR-function $F_k(x_1,\ldots, x_k)\defeq f(x_1\oplus\cdots\oplus x_k)$ in the model of {\em number-in-hand multiparty communication}, which is denoted by $\cc{k}{F_k}$. It is easy to see that $\cc{2}{F_2}\leq\cc{3}{F_3}\leq\ldots$ and $\cc{k}{F_k}\leq k\cdot\pdt{f}$.  Our main result in this note is that $\cc{k}{F_k}$ and $\pdt{f}$ are polynomially equivalent whenever $k\geq 4$.

\begin{theorem}\label{thm:maintheorem}
For any boolean function $f:\set{0,1}^n\rightarrow\set{0,1}$, we define a $4$-argument XOR function by $F(x_1,x_2,x_3,x_4)=f\br{x_1\oplus x_2\oplus x_3\oplus x_4}$. It holds that 
\[\pdt{f}\leq\O\br{\cc{4}{F}^5}.\]
\end{theorem}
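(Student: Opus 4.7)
The plan is to recursively build a parity decision tree for $f$ by invoking Sanders' Bogolyubov-Ruzsa-type lemma on the monochromatic rectangles of an optimal $4$-party protocol.

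I would start by fixing a deterministic $4$-party protocol $\Pi$ for $F$ of cost $c\defeq\cc{4}{F}$. Its transcript partitions $\br{\set{0,1}^n}^4$ into at most $2^c$ monochromatic combinatorial rectangles $A_1\times A_2\times A_3\times A_4$; by averaging, some rectangle has density at least $2^{-c}$, so every $A_i$ has density at least $2^{-c}$ in $\set{0,1}^n$ (a product of four numbers in $\br{0,1}$ can be at least $2^{-c}$ only when each factor is at least $2^{-c}$). Because $\Pi$ is monochromatic on this rectangle and $F=f\br{x_1\oplus\cdots\oplus x_4}$, the function $f$ is constant on the sumset $A_1\oplus A_2\oplus A_3\oplus A_4$. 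Sanders' quasi-polynomial strengthening of the Bogolyubov lemma, in its four-set form, then produces an affine subspace $V+v$ of codimension $d=O\br{c^4}$ inside this sumset, so $f$ is constant on $V+v$.

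Next, I construct the parity decision tree by querying a basis of $V^\perp$ (suitably offset by $v$) — at most $d$ parities — to identify which coset of $V$ contains $x_1\oplus\cdots\oplus x_4$. The branch landing in $V+v$ outputs the constant value of $f$ there, while on each of the other $2^d-1$ cosets $V+v'$ I recurse on the restricted function $f|_{V+v'}$. This gives a target recurrence of the form $T\br{c}\leq O\br{c^4}+T\br{c-1}$, where $T\br{c}$ denotes the worst-case PDT complexity over all $f$ with $\cc{4}{F}\leq c$; summing yields $T\br{c}=O\br{c^5}$, matching the theorem.

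The main obstacle is to justify this recurrence — i.e., to show that on each non-monochromatic coset the restricted XOR function satisfies $\cc{4}{F|_{V+v'}}\leq c-1$. The protocol $\Pi$ itself continues to compute $F|_{V+v'}$ verbatim after input translation, so the drop is not automatic. One must exploit the specific structure of $V+v$ as sitting inside the sumset of a monochromatic leaf of $\Pi$: that leaf becomes effectively unreachable whenever $x_1\oplus\cdots\oplus x_4$ is forced outside $V+v$, and the protocol tree can be pruned by at least one bit on the restricted domain. An alternative (possibly cleaner) approach is to induct on the Fourier sparsity $\norm{\hat{f}}_0\leq 2^c$: restriction to an affine subspace never increases sparsity, and monochromaticity on $V+v$ imposes nontrivial linear relations among Fourier coefficients in $V^\perp$ that can be parlayed into a strict decrease at each round.
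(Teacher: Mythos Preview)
Your setup matches the paper's exactly through the point where Sanders' lemma yields a monochromatic affine subspace of codimension $O(c^4)$. The gap is precisely where you flag it: the recurrence $T(c)\le O(c^4)+T(c-1)$ is not justified, and your first proposed fix fails. Restricting to a coset $V+v'\ne V+v$ does \emph{not} make the big rectangle $A_1\times A_2\times A_3\times A_4$ unreachable, because $A_1+A_2+A_3+A_4$ is typically much larger than $V+v$ (Sanders only gives a subspace \emph{inside} the sumset, not equality), so plenty of tuples in the rectangle can still have sum in $V+v'$. Even if one leaf were unreachable, that would not force the protocol depth to drop. After the natural re-parametrization of $f|_{V+v'}$ as a $4$-party XOR problem over $\F_2^{\dim V}$, all one gets is $\cc{4}{F'}\le c$, not $c-1$.

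The paper resolves this not by decreasing $c$ but by using a \emph{different} potential that is guaranteed to drop: the $\F_2$-degree, via the linear-rank machinery of Tsang et al.\ (Facts~\ref{fact:pdtleqdegM} and~\ref{fact:rkleqminpcert}). The existence of a monochromatic affine subspace of codimension $d=O(c^4)$ gives $\minpcert{}{f}\le d$, hence $\rk{f}\le d$; one then queries the $r\le d$ linear forms $l_1,\dots,l_r$ appearing in a rank decomposition $f=\sum_i l_i f_i + f_0$ (not a basis of $V^\perp$!), after which every restriction has strictly smaller $\F_2$-degree. Since the measure $M(f)\defeq\cc{4}{F}$ is downward non-increasing under affine restriction, each round still costs only $O(c^4)$ queries, and the number of rounds is $\deg_2(f)\le\log\norm{\hat f}_0\le\cc{4}{F}=c$, giving $O(c^5)$. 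Your second suggestion (``induct on Fourier sparsity'') is in the right neighborhood, but the actual decreasing quantity is $\deg_2$, and making it drop requires querying the specific linear forms from the rank decomposition rather than the coset-determining parities you describe.
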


\subsection*{Our techniques}

To show the main theorem, it suffices to construct an efficient PDT for $f$ if the communication complexity of $F$ is small. We adapt a protocol introduced by Tsang et al.~\cite{TsangWXZ:2013}. The main step is to exhibit a large monochromatic affine subspace for $f$ if the communication complexity of $F$ is small. To this end, we adapt the quasipolynomial Bogolyubov-Ruzsa lemma~\cite{Sanders:2012}, which says that $4A\defeq{A+A+A+A}$ contains a large subspace if $A\subseteq\F_2^n$ is large. 

\subsection*{Related work}

A large body of work has been devoted to the Log-Rank conjecture for XOR functions since it was proposed in~\cite{Zhang:2009}. After almost a decade of efforts, the conjecture has been established for several classes of XOR function, such as symmetric functions~\cite{Zhang:2009}, monotone functions and linear threshold functions~\cite{Montanaro:2010}, constant $\F_2$-degree functions~\cite{TsangWXZ:2013}.  

A different line of work close to ours is the {\em simulation theorem} in~\cite{RazM:1999,Zhang:2009, Sherstov:2010, LovettMWZ:2014, Pitassi:2015}. They study the relation between the (regular) decision tree complexity of function $f$ and the communication complexity of $f\circ g^n$ where $g$ is a $2$-argument function of small size.   The simulation theorem asserts that the optimal protocol for $f\circ g^n$ is to simulate the decision tree that computes $f$ if $g$ is a hard function.  Simulation theorems have been established  in various cases, when $g$ is bitwise AND or OR~\cite{Sherstov:2010}, Inner-Product~\cite{LovettMWZ:2014}, Index Function~\cite{RazM:1999,Pitassi:2015}. Our work gives a new simulation theorem when $g$ is an $XOR$ function.

After this work was put online, the author was informed that Hatami and Lovett discovered Theorem~\ref{thm:maintheorem} (using the same idea) a couple of years ago without writing it up. Since our work is independent of theirs, we believe it is worth giving a complete proof to the main theorem.

\section{Preliminaries}\label{Sec:preliminaries}

All logarithms in this note are base $2$. Given $x,y\in\set{0,1}^n$, we define the inner product $x\cdot y\defeq\sum_{i=1}^nx_iy_i\mod 2$.  For simplicity, we write $x+y$ for $x\oplus y$. 

\textbf{Complexity measures.}\quad
Given a  boolean function $f:\set{0,1}^n\rightarrow \set{0,1}^n$, it can be viewed as a polynomial in $\F_2$, and  $\deg_2(f)$ is used to represent its $\F_2$-degree.
\begin{definition}\label{def:paritycertificatecomplexity}
Given a  function $f:V\rightarrow\F_2$, where $V$ is an affine subspace of $\F_2^n$, the parity certificate complexity of $f$ on $x$ is defined to be 
\[\pcert{f,x}\defeq\min\set{\codim{H}~: ~ H\subseteq V~\text{is an affine subspace where $f$ is constant and}~x\in H}\]
where $\codim{H}\defeq\dim{V}-\dim{H}$.
The minimum parity certificate complexity for $b\in\set{0,1}$ is defined as
\[\minpcert{b}{f}\defeq\min_{x\in f^{-1}(b)}\pcert{f,x},\]
and $\minpcert{}{f}\defeq\min_x\pcert{f,x}$.
\end{definition}

\begin{definition}\label{def:rank}
Given a  boolean function $f:\set{0,1}^n\rightarrow \set{0,1}$. We view it as a polynomial in $\F_2$. The {\em linear rank} of $f$,  denoted $\rk{f}$, is the minimum integer $r$, such that $f$ can be expressed as $f=\sum_{i=1}^r l_if_i+ f_0$, where $\deg_2\br{l_i}=1$ for $1\leq i\leq r$ and $\deg_2(f_i)<\deg_2(f)$ for $0\leq i\leq r$.
\end{definition}

\begin{definition}\label{def:PDTcomplexity}
A parity decision tree (PDT) for a boolean function $f:\set{0,1}^n\rightarrow\set{0,1}$ is a tree with internal nodes associated with a subset $S\subseteq[n]$ and each leaf associated with an answer in $\set{0,1}$. To use a parity decision tree to compute $f$, we start from the root and follow a path down to a leaf. At each internal node, we query the parity of the bits with the indices in the associated set and follow the branch according to the answer to the query. Output the associated answer when we reach the leaf.  The deterministic parity decision tree complexity of $f$, denoted by $\pdt{f}$, is the minimum number of queries needed on a worst-case input by a PDT that computes $f$ correctly.
\end{definition}

\begin{definition}\label{def:communicationcomplexity}
In the model of number-in-hand multiparty communication, there are $k$ players $\set{P_1, \ldots, P_k}$ and a $k$-argument function $F:\br{\set{0,1}^n}^k\rightarrow\set{0,1}$. Player $P_i$ is given an $n$-bit input $x_i\in\set{0,1}^n$ for each $i\in [k]$. The communication is in the blackboard model. Namely, every message sent by a player is written on a blackboard visible to all players.  The communication complexity of $f$ in this model, denoted by $\cc{k}{F}$, is the least number of bits needed to be communicated to compute $f$ correctly.
\end{definition}
One way to design a protocol for the $k$-argument XOR-function $F(x_1, \ldots, x_k)\defeq f(x_1+\ldots+x_k)$  to simulate a parity decision tree that computes  $f$.

\begin{fact}
Let $f:\set{0,1}^n\rightarrow\set{0,1}$ be a boolean function and $F$ be the $k$-argument XOR function defined as $F(x_1, \ldots, x_k)\defeq f(x_1+\cdots+x_k)$. It holds that 
$\cc{k}{F}\leq k\cdot\pdt{f}$.
\end{fact}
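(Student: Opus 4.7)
The plan is to show that the players can simulate an optimal parity decision tree for $f$ on the hidden input $y \defeq x_1 + \cdots + x_k$, paying only $k$ bits of communication per query. Fix a PDT $\T$ for $f$ of depth $d = \pdt{f}$, and let the players agree on $\T$ in advance. The protocol will traverse $\T$ from root to leaf, with all players maintaining the current node on the blackboard.

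At each internal node, the tree queries the parity $\bigoplus_{i \in S} y_i$ for some $S \subseteq [n]$ determined by the path so far. The key observation, and really the only nontrivial ingredient, is that XOR distributes over the players' inputs:
\[
\bigoplus_{i \in S} y_i \;=\; \bigoplus_{i \in S} \br{x_1 \oplus \cdots \oplus x_k}_i \;=\; \bigoplus_{j=1}^k \bigoplus_{i \in S}\br{x_j}_i.
\]
So the required query value is the XOR of $k$ one-bit quantities, each of which is local to a single player. The round proceeds by having each player $P_j$ compute $b_j \defeq \bigoplus_{i \in S}(x_j)_i$ from their own input and write $b_j$ on the blackboard; all players then read off $b_1 \oplus \cdots \oplus b_k$ and descend to the appropriate child of the current node in $\T$.

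When a leaf is reached, the label of that leaf equals $f(y) = F(x_1, \ldots, x_k)$, which every player reads from the blackboard. The protocol performs at most $d$ queries along any root-to-leaf path and spends exactly $k$ bits per query, so the total communication is at most $k \cdot d = k \cdot \pdt{f}$, giving the claimed bound.

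There is essentially no obstacle here beyond the observation that parity is linear over the $k$-fold direct sum, so that each global parity query decomposes into $k$ local parity queries that the players can evaluate independently. The only minor bookkeeping is to note that $k$ bits per query suffice (one broadcast from each player) and that all players can thus maintain the same view of the traversal, so no extra coordination cost is incurred.
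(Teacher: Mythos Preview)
Your proof is correct and is exactly the simulation argument the paper has in mind; the paper states this as a Fact without a formal proof, giving only the one-line hint that the players simulate a parity decision tree for $f$, and you have fleshed out precisely that idea.
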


\textbf{Fourier analysis.} For any real function $f:\set{0,1}^n\rightarrow \reals$, the Fourier coefficients are defined as $\hat{f}\br{s}\defeq\frac{1}{2^n}\sum_xf(x)\chi_s(x)$ for $s\in\set{0,1}^n$,  where $\chi_s(x)\defeq(-1)^{s\cdot x}$. 
The function $f$ can be decomposed as $f=\sum_s\hat{f}(s)\chi_s$. The $\ell_p$ norm of $\hat{f}$ for any $p\geq 1$ is defined as $\norm{\hat{f}}_p\defeq\br{\sum_s\abs{\hat{f}(s)}^p}^{1/p}$. The Fourier sparsity $\norm{\hat{f}}_0$ is the number of nonzero Fourier coefficients of $f$.

Let $V\subseteq\F_2^n$ be an affine subspace and $f:V\rightarrow\F_2$ be a boolean function. A complexity measure of $f$ $m(f)$ is {\em downward non-increasing} if $m(f')\leq m(f)$ for any subfunction $f'$ obtained by restricting $f$ to an affine subspace of $V$. For instance, $\deg_2(\cdot)$ is downward non-increasing.
\begin{fact}\cite{TsangWXZ:2013}\label{fact:pdtleqdegM}
If $\rk{\cdot}\leq \mathrm{m}(\cdot)$ for some downward non-increasing complexity measure $m$, then it holds that $\pdt{f}\leq m(f)\cdot\log\norm{\hat{f}}_0$.
\end{fact}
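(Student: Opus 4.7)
The plan is to prove, by induction on $\deg_2(f)$, the (stronger) inequality $\pdt{f}\le m(f)\cdot\deg_2(f)$ whenever $\rk{\cdot}\le m(\cdot)$, and then to invoke the standard estimate $\deg_2(f)\le\log\norm{\hat f}_0$ (every Boolean function of $\F_2$-degree $d$ has Fourier sparsity at least $2^d$) to derive the stated bound.

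For the base case $\deg_2(f)=0$ the function is constant on its domain and no queries are needed, so $\pdt{f}=0$. For the inductive step, set $d\defeq\deg_2(f)\ge 1$. Since $\rk{f}\le m(f)$, unpacking the definition of $\rk{\cdot}$ yields a decomposition
\[
f\;=\;\sum_{i=1}^{r} l_i\,f_i\;+\;f_0,\qquad r\le m(f),
\]
in which each $l_i$ has $\F_2$-degree $1$ (equivalently, is a parity of some subset of the input bits plus a constant) and each $f_i$ and $f_0$ has $\F_2$-degree strictly less than $d$.

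The PDT I build queries the $r$ parities underlying $l_1,\dots,l_r$ at the top, costing $r\le m(f)$ queries and restricting $f$ to an affine subspace $V'$ of codimension at most $r$ on which each $l_i$ has been fixed to some bit $b_i$. On $V'$ the function becomes $f|_{V'}=\sum_{i=1}^{r} b_i\,f_i|_{V'}+f_0|_{V'}$, and therefore $\deg_2(f|_{V'})\le d-1$. Because $m$ is downward non-increasing, $m(f|_{V'})\le m(f)$ and the rank hypothesis still holds on $V'$, so the inductive hypothesis applies on each branch and yields at most $m(f)\cdot(d-1)$ further queries. Adding the two contributions gives $\pdt{f}\le r+m(f)(d-1)\le m(f)\cdot d$, which combined with $\deg_2(f)\le\log\norm{\hat f}_0$ finishes the proof.

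The only delicate point is that the inductive hypothesis must apply to $f|_{V'}$, which lives on an affine subspace rather than the full cube. Fortunately the paper's definitions of $\pdt{\cdot}$, $\rk{\cdot}$, and the downward non-increasing condition on $m$ are already formulated in this generality, so the recursion closes cleanly. The only ingredient drawn from outside the excerpt is the standard Fourier-sparsity lower bound $\norm{\hat f}_0\ge 2^{\deg_2(f)}$, which I would cite rather than reprove.
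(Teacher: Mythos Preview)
The paper does not itself prove this statement; it is quoted as a fact from \cite{TsangWXZ:2013}. Your argument is correct and is exactly the proof given there: use the rank decomposition $f=\sum_{i\le r}l_if_i+f_0$ with $r\le m(f)$ to drop $\deg_2$ by one after at most $m(f)$ parity queries, recurse on $\deg_2$ using that $m$ is downward non-increasing, and finish via the standard inequality $\deg_2(f)\le\log\norm{\hat f}_0$.
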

\begin{fact}\cite{TsangWXZ:2013}\label{fact:rkleqminpcert}
For all non-constant $f: \F^n_2\rightarrow \F_2$, it holds  that $\rk{f}\leq\minpcert{}{f}.$
\end{fact}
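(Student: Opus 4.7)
The plan is to read off the decomposition demanded by Definition~\ref{def:rank} directly from a monochromatic affine subspace of minimal codimension. Fix some $x^{\ast}\in\F_2^n$ with $\pcert{f,x^{\ast}}=\minpcert{}{f}=:r$, and let $H\subseteq\F_2^n$ be an affine subspace of codimension $r$ through $x^{\ast}$ on which $f\equiv c$ for some $c\in\F_2$. Because $H$ has codimension $r$, there exist affine-linear forms $l_1,\ldots,l_r:\F_2^n\to\F_2$, each of $\F_2$-degree $1$ and linearly independent modulo constants, with
\[
H=\cbr{x\in\F_2^n\,:\,l_1(x)=\cdots=l_r(x)=0}.
\]
Extend $\set{l_1,\ldots,l_r}$ to a full affine coordinate system $l_1,\ldots,l_n$, so that the map $x\mapsto(l_1(x),\ldots,l_n(x))$ is an invertible affine transformation. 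A standard check shows that an invertible affine substitution preserves the $\F_2$-degree of the multilinear representative of a function.

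In the new coordinates $y_i:=l_i(x)$, let $p(y_1,\ldots,y_n)$ be the (unique) multilinear polynomial representing $f$; then $\deg_2(p)=\deg_2(f)$. The assumption $f|_H\equiv c$ translates to $p(0,\ldots,0,y_{r+1},\ldots,y_n)\equiv c$, so every monomial of $p-c$ must contain at least one variable from $\set{y_1,\ldots,y_r}$. Grouping each such monomial by the smallest index $i\leq r$ for which $y_i$ appears, and factoring that $y_i$ out, gives
\[
p(y)-c=\sum_{i=1}^{r}y_i\,q_i(y),
\]
with each $q_i$ multilinear. Because we pulled one $y_i$ out of every monomial, each $q_i$ has degree strictly less than $\deg_2(p)=\deg_2(f)$. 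Substituting $y_i=l_i(x)$ back yields
\[
f(x)=c+\sum_{i=1}^{r}l_i(x)\,f_i(x)
\]
with $\deg_2(l_i)=1$ and $\deg_2(f_i)<\deg_2(f)$ for $1\leq i\leq r$. Since $f$ is non-constant we have $\deg_2(f)\geq 1$, so also $f_0:=c$ satisfies $\deg_2(f_0)<\deg_2(f)$. This is exactly the decomposition from Definition~\ref{def:rank} with $r$ linear pieces, so $\rk{f}\leq r=\minpcert{}{f}$.

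The only delicate point is the degree bookkeeping. The entire argument hinges on the observation that ``vanishing on the axis-aligned subspace $y_1=\cdots=y_r=0$'' is exactly the obstruction that forces every surviving monomial to contain some variable among $y_1,\ldots,y_r$, after which the multilinear representation makes factoring out that variable drop the degree by one. The coordinate change is used purely to put $H$ into this axis-aligned form, and the claim that invertible affine substitutions preserve $\F_2$-degree of the multilinear representative follows by applying the substitution and its inverse and reducing modulo $\set{y_j^2+y_j}$; everything else is routine polynomial manipulation.
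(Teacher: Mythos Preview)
Your argument is correct. The paper itself does not give a proof of this statement; it quotes it as a fact from~\cite{TsangWXZ:2013} without argument. Your proof---take a minimum-codimension monochromatic affine subspace $H$, change coordinates so that $H$ becomes $\{y_1=\cdots=y_r=0\}$, and observe that every monomial of the multilinear representative of $f-c$ must then contain some $y_i$ with $i\le r$, yielding the decomposition in Definition~\ref{def:rank}---is the standard one and is essentially what appears in the cited source. The degree bookkeeping you note (invertible affine substitutions preserve $\deg_2$, factoring one variable from a multilinear monomial drops the degree by one, and $f$ non-constant forces $\deg_2(f_0)=\deg_2(c)<\deg_2(f)$) is exactly the content of the lemma, and you handle it correctly.
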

%\begin{lemma}
%Given a boolean function $f:\F_2^n\rightarrow\F_2$, define a $4$-argument XOR function $F:\br{\F_2^n}^4\rightarrow\F_2$ to be $F\br{x_1,x_2,x_3,x_4}\defeq\f(x_1\oplus x_2\oplus x_3\oplus x_4)$. It holds that $\minpcert{}{f}\leq\cc{4}{F}$.
%\end{lemma}
%\begin{proof}
%Suppose there exists a protocol that computes $F$ with communication $c$. By the standard argument, there exists a monochromatic $4$-dimensional hyperrectangle $A\times B\times C\times D\in\br{\F_2^n}^4$  such that $\abs{A\times B\times C\times D}\geq2^{4n-c}$.
%\end{proof}

\textbf{Additive combinatorics.} Given two sets $A, B\subseteq\F_2^n$ and an element $x\in\F_2^n$, $A+B\defeq\set{a+b: a\in A, b\in B }$ and $x+A\defeq\set{x+a: a\in A}$. For any integer $t$, $tA\defeq A+\ldots+ A$ where the summation includes $A$ for $t$ times.
Studying the structure of $tA$ for small constant $t$ is one of the central topics in additive combinatorics. Readers may refer to the excellent textbook~\cite{TaoVu:2009}. The following is the famous quasi-polynomial Bogolyubov-Ruzsa lemma due to Sanders~\cite{Sanders:2012}. It asserts that $4A$ contains a large subspace if $A\subseteq\F_2^n$ is large. Readers may refer to the nice exposition~\cite{Lovett:2014b} by Lovett.
\begin{fact}\cite{Sanders:2012, Ben-sassonRW:2014} \label{fact:bogloyubovRuzsa}
 Let $A\subseteq\F^n_2$ be a subset of size $|A|=\alpha2^n$. Then there exists a subspace $V$ of $\F^n_2$ satisfying $V\subseteq 4A$ and 
\[\codim{V}=\O\br{\log^4\br{\alpha^{-1}}}.\]
\end{fact}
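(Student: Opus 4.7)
The plan is to use Fourier analysis on $\F_2^n$ together with the Croot--Sisask almost-periodicity lemma and a careful iterative construction. The starting observation is that $x\in 4A$ if and only if $(1_A\ast 1_A\ast 1_A\ast 1_A)(x)>0$, and Fourier inversion gives $(1_A\ast 1_A\ast 1_A\ast 1_A)(x)=\sum_{s\in\F_2^n}|\hat{1_A}(s)|^4\chi_s(x)$. Hence if $\Lambda\subseteq\F_2^n$ contains all ``heavy'' Fourier coefficients of $1_A$ and $V=\Lambda^{\perp}$, every $x\in V$ satisfies $\chi_s(x)=1$ for all $s\in\Lambda$, so the $\Lambda$-part contributes at least $|\hat{1_A}(0)|^4=\alpha^4$, while the remainder is controlled via Parseval.

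As a warm-up, taking $\Lambda=\set{s:|\hat{1_A}(s)|^2\ge\alpha^3/2}$ together with the Parseval identity $\sum_s|\hat{1_A}(s)|^2=\alpha$ forces $|\Lambda|\le 2/\alpha^2$, and the tail term is bounded by $\alpha^4/2$. This recovers Bogolyubov's original bound $\codim{V}=\O\br{\alpha^{-2}}$. The goal of Sanders' theorem is to replace this polynomial dependence by a polylogarithmic one, which requires a much more economical choice of $\Lambda$.

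The key new ingredient I would invoke is the Croot--Sisask almost-periodicity lemma: for every $\ve\in(0,1)$ and integer $k\ge 1$, there exists a set $T\subseteq\F_2^n$ of density at least $(\alpha/2)^{\O\br{k/\ve^2}}$ such that every $t\in T-T$ is an $\ve$-almost period of $1_A\ast\mu_A$ in the $L^{2k}$ norm, where $\mu_A$ is the uniform probability measure on $A$; the lemma itself is proved by random sampling from $A$ and Marcinkiewicz--Zygmund concentration. Choosing $k\sim\log\br{1/\alpha}$ and $\ve$ a small absolute constant effectively upgrades $L^{2k}$-almost periodicity to the $L^{\infty}$ regime needed to control the four-fold convolution pointwise, and combined with a Pl\"unnecke--Ruzsa estimate on the doubling of $T$ one can iterate: at each stage one either already extracts an affine subspace of $4A$ of controlled codimension, or one refines to a period set of only modestly smaller density while enlarging the ``resolved'' subspace by $\O\br{\log\br{1/\alpha}}$ dimensions.

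Running this iteration for $\O\br{\log\br{1/\alpha}}$ rounds with parameters tuned so that the density loss per round remains a constant factor yields a subspace $V\subseteq 4A$ with $\codim{V}=\O\br{\log^4\br{1/\alpha}}$. The main obstacle is the quantitative bookkeeping: each application of Croot--Sisask shrinks the working density multiplicatively, each refinement step adds $\O\br{\log\br{1/\alpha}}$ to the codimension budget, and converting $L^{2k}$-periodicity into pointwise positivity of a four-fold convolution introduces further $\log$ factors. Obtaining exactly the exponent $4$---rather than the larger exponents given by more naive iterations based on Chang's lemma alone---requires a delicate joint balance of $\ve$, $k$, and the number of rounds, which is the technical heart of~\cite{Sanders:2012}.
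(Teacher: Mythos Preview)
The paper does not prove this statement at all: it is stated as a \emph{Fact} with citations to \cite{Sanders:2012, Ben-sassonRW:2014} and then used as a black box in the proof of Lemma~\ref{lem:BR}. There is therefore nothing in the paper to compare your proposal against.

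That said, your sketch does track the large-scale architecture of Sanders' actual argument: the Fourier warm-up recovering the classical Bogolyubov bound, then the Croot--Sisask almost-periodicity lemma as the engine, and an iteration that trades density for codimension. A couple of points where your description drifts from the real proof: the iteration is not really a density-increment on $A$ but rather a process of building up a large set of approximate periods and then passing to a subspace inside an iterated sumset of that period set; and the conversion from $L^{2k}$ to pointwise control is done by convolving against an extra copy of $\mu_A$ and using H\"older, not by pushing $k$ all the way to an $L^\infty$ statement. Your final accounting (``$\O(\log(1/\alpha))$ rounds, each adding $\O(\log(1/\alpha))$ to the codimension'') would give $\O(\log^2(1/\alpha))$, not $\O(\log^4(1/\alpha))$; the two extra logarithms come from the $k/\ve^2$ exponent in the Croot--Sisask density loss, which feeds into the codimension via Chang's lemma. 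As you yourself note, the quantitative bookkeeping is the substance of \cite{Sanders:2012}, and your outline is a reasonable high-level summary rather than a proof.
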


\section{Main result}

\begin{lemma}\label{lem:BR}
Let $1\leq c\leq n$, $A_1,A_2,A_3,A_4\subseteq\F_2^n$ be subsets of size at least $2^{n-c}$. Then there exists an affine subspace $V\subseteq A_1+A_2+A_3+A_4$ of $\F_2^n$ such that 
\[\codim{V}=\O\br{c^4}.\]
\end{lemma}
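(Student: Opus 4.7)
The plan is to reduce the four-set statement to the single-set Fact~\ref{fact:bogloyubovRuzsa} by three averaging/pigeonhole steps that, up to a fixed translation, replace the four distinct sets $A_1,A_2,A_3,A_4$ by a single large subset $D\subseteq\F_2^n$ whose fourfold sumset $D+D+D+D$ embeds into $A_1+A_2+A_3+A_4$.

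First I would pair $(A_1,A_2)$ and $(A_3,A_4)$ independently. Since $\sum_{x}\abs{A_1\cap(A_2+x)}=\abs{A_1}\abs{A_2}\geq 2^{2n-2c}$, averaging yields some $x_0\in\F_2^n$ with $A'_1\defeq A_1\cap(A_2+x_0)$ of size at least $2^{n-2c}$, so that $A'_1\subseteq A_1$ and $A'_1+x_0\subseteq A_2$. The analogous step on $(A_3,A_4)$ produces $y_0\in\F_2^n$ and $A'_3\subseteq A_3$ of size at least $2^{n-2c}$ with $A'_3+y_0\subseteq A_4$. A third averaging on $(A'_1,A'_3)$ then gives $z\in\F_2^n$ and $D\defeq A'_1\cap(A'_3+z)$ of size at least $2^{n-4c}$, and by construction $D\subseteq A_1$, $D+x_0\subseteq A_2$, $D+z\subseteq A_3$, and $D+z+y_0\subseteq A_4$.

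The next step exploits characteristic $2$: for any $d_1,d_2,d_3,d_4\in D$, the quadruple $\br{d_1,\,d_2+x_0,\,d_3+z,\,d_4+z+y_0}$ lies in $A_1\times A_2\times A_3\times A_4$ and sums to $d_1+d_2+d_3+d_4+x_0+y_0$, since $2z=0$ in $\F_2^n$. Hence
\[\br{D+D+D+D}+\br{x_0+y_0}\subseteq A_1+A_2+A_3+A_4.\]
Because $\abs{D}/2^n\geq 2^{-4c}$, applying Fact~\ref{fact:bogloyubovRuzsa} to $D$ produces a subspace $V\subseteq D+D+D+D$ with $\codim{V}=\O\br{\log^4\br{2^{4c}}}=\O\br{c^4}$, and the affine subspace $V+\br{x_0+y_0}$ has the required properties.

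I do not anticipate a substantive obstacle. Each pigeonhole costs at most a factor of $2^{2c}$ in density, so after the three steps the density of $D$ is still at least $2^{-4c}$, which is comfortable for Sanders's bound to deliver codimension $\O\br{c^4}$. The only point requiring mild care is tracking the translations so that the four disparate shifts collapse to the single translate $x_0+y_0$, which relies crucially on $2z$ vanishing in $\F_2^n$.
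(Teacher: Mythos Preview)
Your proof is correct and essentially identical to the paper's: three pigeonhole/averaging steps produce a single large set $D$ (the paper calls it $A$) with $4D$ contained in a translate of $A_1+A_2+A_3+A_4$, after which Fact~\ref{fact:bogloyubovRuzsa} finishes. The only point you glossed over is the trivial edge case $c\geq n^{1/4}$, where the bound $\abs{D}\geq 2^{n-4c}$ may drop below $1$ and $D$ is not obviously nonempty; the paper dispatches this by noting that any single point of $A_1+A_2+A_3+A_4$ then already has codimension $n=\O\br{c^4}$.
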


\begin{proof}
The lemma is trivial if $c\geq n^{1/4}$. We assume that $c<n^{1/4}$. As $|A_1+ A_2|\leq 2^n$, there exists an element $a\in\F_2^n$ such that $a=a_1+ a_2$ for at least $2^{n-2c}$ pairs $(a_1,a_2)\in A_1\times A_2$. Then we have $|A_1\cap \br{A_2+ a}|\geq 2^{n-2c}$. For the same reason, there exists an element $a'\in\F_2^n$ such that $|A_3\cap \br{A_4+ a'}|\geq 2^{n-2c}$.  Note that $|\br{A_1\cap \br{A_2+ a}}+ \br{A_3\cap \br{A_4+ a'}}|\leq 2^n$. Thus there exists an element $a''\in\F_2^n$ such that $a''=a_3+ a_4$ for at least $2^{n-4c}$ pairs $(a_3,a_4)\in\br{A_1\cap\br{ A_2+ a}} \times \br{A_3\cap \br{A_4+ a'}}$. Set $$A=A_1\cap\br{A_2+ a}\cap\br{\br{A_3\cap\br{A_4+ a'}}+ a''}=A_1\cap\br{A_2+ a}\cap\br{A_3+ a''}\cap\br{A_4+ a'+ a''}.$$ We have $\abs{A}\geq2^{n-4c}>0$ since $c<n^{1/4}$. Thus there exists a subspace $V\subseteq 4A$ of codimension $\codim{V}\leq\O\br{c^4}$ by Theorem~\ref{fact:bogloyubovRuzsa}.  Note that $4A\subseteq A_1+ A_2+ A_3+ A_4+ a+ a'$. The affine subspace $V+ a+ a'$ serves the purpose.
\end{proof}
We define a downward non-increasing measure which is an upper bound on $\rk{\cdot}$.
\begin{definition}\label{def:M}
Given a function $f:V\rightarrow\F_2$, where $V$ is an affine subspace of $\F_2^n$ and $t\defeq\dim\br{V}$,  let $L:\F_2^t\rightarrow\F_2^n$ be an affine map satisfying that $L\br{\F_2^t}=V$. Set $F:\br{\F_2^t}^4\rightarrow\F_2$ by $F\br{x_1,x_2,x_3,x_4}\defeq f\br{L(x_1+x_2+x_3+x_4)}.$ The complexity of $f$ is defined to be $M(f)\defeq\cc{4}{F}.$
\end{definition}
\noindent Note that the affine map is invertible.  The complexity $M(f)$ does not depend on the choice of the affine map.
\begin{lemma}\label{lem:Mdownward}
$M\br{\cdot}$ is downward non-increasing.
\end{lemma}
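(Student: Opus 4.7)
The plan is to show that whenever $V' \subseteq V$ is an affine subspace and $f' \defeq f|_{V'}$, then $\cc{4}{F'} \le \cc{4}{F}$, where $F,F'$ are the $4$-argument XOR-functions associated with $f,f'$ as in Definition~\ref{def:M}. Since $M$ by definition is independent of the choice of the affine parametrization, it suffices to show this for any single choice of $L$ and $L'$.

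Let $t = \dim(V)$, $t' = \dim(V')$, and fix affine bijections $L:\F_2^t\to V$ and $L':\F_2^{t'}\to V'$. The point is that $\phi \defeq L^{-1}\circ L'$ is a well-defined affine map $\F_2^{t'}\to\F_2^t$, since $L'$ takes values in $V' \subseteq V$ and $L$ is an affine bijection onto $V$. Write $\phi(y) = \Phi y + c$ with $\Phi$ a linear map and $c\in\F_2^t$ a fixed vector.

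Given any protocol $\Pi$ computing $F$ with cost $\cc{4}{F}$, I construct a protocol $\Pi'$ for $F'$ of the same cost as follows. The players share $\phi$ as public information (it depends only on $V,V',L,L'$, not on inputs). On input $(y_1,y_2,y_3,y_4)\in(\F_2^{t'})^4$, player $1$ locally sets $x_1 \defeq \Phi y_1 + c$ and players $i=2,3,4$ locally set $x_i \defeq \Phi y_i$. The players then run $\Pi$ on $(x_1,x_2,x_3,x_4)$ and output its answer. Correctness follows from
\[
x_1 + x_2 + x_3 + x_4 \;=\; \Phi(y_1+y_2+y_3+y_4) + c \;=\; \phi(y_1+y_2+y_3+y_4),
\]
so $L(x_1+x_2+x_3+x_4) = L'(y_1+y_2+y_3+y_4) \in V'$, and hence
\[
F(x_1,x_2,x_3,x_4) \;=\; f\br{L'(y_1+y_2+y_3+y_4)} \;=\; f'\br{L'(y_1+y_2+y_3+y_4)} \;=\; F'(y_1,y_2,y_3,y_4),
\]
where the second equality uses $f' = f|_{V'}$ and $L'(y_1+y_2+y_3+y_4)\in V'$. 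The communication of $\Pi'$ equals that of $\Pi$, so $\cc{4}{F'}\le\cc{4}{F}$ and therefore $M(f')\le M(f)$.

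This is essentially a soft simulation argument with no real obstacle; the only subtlety to watch is that the local preprocessing on each player's input must be \emph{affine} rather than merely linear (the constant shift $c$ coming from $\phi$ being affine is absorbed into a single player, here player~1), so that the XOR $x_1+x_2+x_3+x_4$ realises the affine image $\phi(y_1+y_2+y_3+y_4)$ exactly.
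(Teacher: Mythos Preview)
Your proof is correct and follows essentially the same approach as the paper: locally transform each player's input by the linear part of the affine map $L^{-1}\circ L'$, have player~1 absorb the constant shift, and then simulate the optimal protocol for $F$. Your version is stated slightly more generally (for arbitrary $V'\subseteq V$ rather than only $V'\subseteq\F_2^n$) and is a bit more carefully written, but the idea is identical.
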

\begin{proof}
Let $f:\F_2^n\rightarrow\F_2$ be a boolean function and $V\subseteq\F_2^n$ be an affine subspace. It suffices to show that $M\br{f}\geq M\br{f\textbar_{V}}$. Let $F$ and $F'$ be the $4$-argument functions given by Definition~\ref{def:M} corresponding to $f$ and $f\textbar_V$, respectively. Assume that $L\br{z}\defeq Az+b$ is the corresponding affine map in Definition~\ref{def:M}.  Given input $\br{x_1,x_2,x_3,x_4}\in\br{\F_2^t}^4$, where $t=\dim V$, player $P_1$ computes $x_1'=A_1x+b$ and players $P_i$ computes $x_i'=Ax_i$ for $i=2,3,4$.  Note that $L\br{x_1+x_2+x_3+x_4}=Ax_1+Ax_2+Ax_3+Ax_4+b$. We have  $F'\br{x_1,x_2,x_3,x_4}=f\br{x_1'+x_2'+x_3'+x_4'}=F\br{x_1',x_2',x_3',x_4'}$. The players simulate the protocol that computes $F$ on input $(x_1',x_2',x_3',x_4')$ and get $F'\br{x_1,x_2,x_3,x_4}$.   Thus $M\br{f'}=\cc{4}{F'}\leq\cc{4}{F}=M\br{f}$.
\end{proof}
\begin{lemma}\label{lem:minpcertvsM}
For any $f: V \rightarrow\F_2$, where $V$ is an affine subspace of $\F_2^n$, it holds that  $\minpcert{}{f}=\O\br{M\br{f}^4}$.
\end{lemma}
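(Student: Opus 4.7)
The plan is to extract, from any optimal $4$-party protocol for $F$ of cost $c=M(f)$, an affine subspace of $V$ of codimension $\O\br{c^4}$ on which $f$ is constant. Set $t=\dim V$ and let $L:\F_2^t\rightarrow\F_2^n$ be the affine bijection onto $V$ from Definition~\ref{def:M}, so that $F(x_1,x_2,x_3,x_4)=f\br{L(x_1+x_2+x_3+x_4)}$ has $\cc{4}{F}=c$.

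First I would invoke the standard rectangle structure of number-in-hand protocols: the optimal protocol for $F$ partitions $\br{\F_2^t}^4$ into at most $2^c$ monochromatic $4$-cylinders $A_1\times A_2\times A_3\times A_4$. Pigeonhole then yields one such $F$-monochromatic rectangle with $|A_1||A_2||A_3||A_4|\geq 2^{4t-c}$.

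Second, I would re-run the shift-and-intersect argument from the proof of Lemma~\ref{lem:BR} on this rectangle. Two pigeonholes on the sumset map produce shifts $a,a'\in\F_2^t$ with $|A_1\cap(A_2+a)|\geq|A_1||A_2|/2^t$ and $|A_3\cap(A_4+a')|\geq|A_3||A_4|/2^t$, and a third pigeonhole produces $a''$ so that $A\defeq A_1\cap(A_2+a)\cap(A_3+a'')\cap(A_4+a'+a'')$ has size at least $2^{t-c}$. Sanders' theorem (Fact~\ref{fact:bogloyubovRuzsa}) then supplies a linear subspace $W\subseteq 4A$ with $\codim{W}=\O\br{c^4}$, and since $4A\subseteq A_1+A_2+A_3+A_4+(a+a'+2a'')$ and $2a''=0$ in characteristic $2$, the affine subspace $V'\defeq W+(a+a')$ lies inside $A_1+A_2+A_3+A_4$ and still has codimension $\O\br{c^4}$ in $\F_2^t$.

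Finally, because $F$ is constant on the chosen rectangle, $f\circ L$ is constant on $A_1+A_2+A_3+A_4\supseteq V'$; as $L$ is an affine bijection onto $V$, the image $L(V')$ is an affine subspace of $V$ of codimension $\O\br{c^4}$ in $V$ on which $f$ is constant, giving $\minpcert{}{f}\leq\O\br{c^4}=\O\br{M(f)^4}$. The one point requiring care is that step~1 only lower-bounds the product $|A_1||A_2||A_3||A_4|$, not the individual sizes, so Lemma~\ref{lem:BR} cannot be cited verbatim; fortunately its proof only ever uses that product lower bound, so reopening it and tracking constants is routine rather than the main obstacle.
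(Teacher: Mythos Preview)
Your proposal is correct and follows essentially the same approach as the paper. The one place you take an unnecessary detour is your final remark: from $|A_1||A_2||A_3||A_4|\geq 2^{4t-c}$ together with the trivial bound $|A_j|\leq 2^t$ one immediately gets $|A_i|\geq 2^{t-c}$ for each $i$, so Lemma~\ref{lem:BR} \emph{can} be cited verbatim (this is exactly what the paper does) and there is no need to re-open its proof.
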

\begin{proof}
We assume w.l.o.g. that $V=\F_2^n$ . Let $F\br{x_1,x_2,x_3,x_4}\defeq f\br{x_1+ x_2+x_3+ x_4}$.  Let $c\defeq\cc{4}{F}$.  The optimal protocol partitions the domain into at most $2^c$ monochromatic hyperrectangles. Thus there exists a monochromatic hyperrectangle $A_1\times A_2\times A_3\times A_4$ satisfying $\abs{A_1\times A_2\times A_3\times A_4}\geq 2^{4n-c}$. Hence $\abs{A_i}\geq 2^{n-c}$ for $1\leq i\leq 4$. Using Lemma~\ref{lem:BR}, there exists an affine subspace $V\subseteq A_1+ A_2+A_3+A_4$ satisfying $\codim{V}=\O\br{c^4}$.  It implies that $\minpcert{}{f}\leq\O\br{c^4}$. The result follows. 
\end{proof}
Combining Fact~\ref{fact:rkleqminpcert}, Lemma~\ref{lem:Mdownward} and Lemma~\ref{lem:minpcertvsM}, we have
\[\pdt{f}\leq\O\br{M(f)^4\cdot\log\norm{\hat{f}}_0}.\]
By Definition~\ref{def:M}, $M\br{f}\leq\cc{4}{F}$. Note that $\log\norm{\hat{f}}_0\leq\cc{4}{F}$. The main theorem follows.

\subsection*{Open problems}

Here we list two open problems towards proving the Log-Rank Conjecture for XOR functions.

\begin{enumerate}

\item The most interesting work along this line is to show that the PDT complexity of $f$ and the communication complexity of the corresponding $2$-argument XOR-function $F_2$ are polynomially equivalent. 

\item Can we extend Theorem~\ref{thm:maintheorem} to the randomized communication complexity?
\end{enumerate}

\subsection*{Acknowledgement}
I would like to thank Ronald de Wolf for helpful discussion and improving the presentation. I also thank Shengyu Zhang for his comments and Shachar Lovett for informing us about his unpublished proof with Hatami.

\bibliographystyle{alpha}
\bibliography{references}

\appendix

\end{document}